\def\identity{\leavevmode\hbox{\small1\kern-3.8pt\normalsize1}}
\newtheorem{theorem}{Theorem}
\newtheorem{mydef}{Definition}
\renewcommand{\epsilon}{\varepsilon}
\begin{document}

\title{Necessary and sufficient condition for state-independent contextual measurement scenarios}
%\title{State (in)dependent graphs: fractional colorings and a cloud of contextuality}
\author{Ravishankar \surname{Ramanathan}}
\email{ravishankar.r.10@gmail.com}
\affiliation{National Quantum Information Center of Gdansk,  81-824 Sopot, Poland}
\affiliation{University of Gdansk, 80-952 Gdansk, Poland}
%\affiliation{Centre for Quantum Technologies, National University of Singapore, 3 Science Drive 2, 117543 Singapore, Singapore}
\author{Pawel \surname{Horodecki}}
\affiliation{National Quantum Information Center of Gdansk, 81-824 Sopot, Poland}
\affiliation{Faculty of Applied Physics and Mathematics, Technical University of Gdansk, 80-233 Gdansk, Poland}

\begin{abstract}

The problem of identifying measurement scenarios capable of revealing state-independent contextuality in a given Hilbert space dimension is considered. We begin by showing that for any given dimension $d$ and any measurement scenario consisting of projective measurements, (i) the measure of contextuality of a quantum state is entirely determined by its spectrum, so that pure and maximally mixed states represent the two extremes of contextual behavior, and that (ii) state-independent contextuality is equivalent to the contextuality of the maximally mixed state up to a global unitary transformation. We then derive a necessary and sufficient condition for a measurement scenario represented by an orthogonality graph to reveal state-independent contextuality. This condition is given in terms of the fractional chromatic number of the graph $\chi_f(G)$ and is shown to identify all state-independent contextual measurement scenarios including those that go beyond the original Kochen-Specker paradigm \cite{Yu-Oh}.

\end{abstract}
\maketitle

{\it Introduction.} One of the most striking features of the quantum world is contextuality, the notion that outcomes cannot be assigned to measurements independent of the particular contexts in which the measurements are realized. The incompatibility of such assignments manifests itself in the famous Kochen-Specker theorem \cite{KS} which states that for every quantum system belonging to a Hilbert space of dimension greater than two, irrespective of its actual state, a finite set of measurements exists whose results cannot be assigned in a context-independent manner. This phenomenon is known as state-independent contextuality, there are also state-dependent tests \cite{KCBS} in which only a subset of quantum states in a certain Hilbert space display contextual behavior while the results of measurements on other states can be explained by deterministic non-contextual assignments (or their probabilistic mixtures). 

A natural question is: when does a measurement scenario consisting of a set of projective measurements reveal the contextuality of some quantum state? In particular, when does a measurement scenario (up to a global unitary transformation of the projective measurements) reveal the contextuality of all quantum states belonging to a Hilbert space of particular dimension? Some attention has been devoted to these questions \cite{Cabello4, Cabello3}, mainly to the problem of finding the minimal set of measurements that reveals the contextuality of all states of a given dimension \cite{Cabello1, Peres, Cabello2}. It has been shown that the originally studied Kochen-Specker (KS) sets are not the only class of state-independent contextual measurements, there exist measurement scenarios for which all the original KS constraints are obeyed and yet they yield state-independent contextuality (S-IC) \cite{Yu-Oh}. The formulation of general conditions to identify all S-IC measurement scenarios has thus gained importance, more so with the development of applications of contextuality such as in device-independent security \cite{Horodecki2}, in formulating Bell inequalities that are algebraically violated by quantum correlations \cite{Acin2}, in certifying the dimension of quantum systems \cite{Guhne}, in parity-oblivious multiplexing \cite{Spekkens}, etc.

In the study of contextual measurements, a number of tools and invariants of graph theory have appeared \cite{Fritz}. In this paper, we focus on the most commonly used paradigm in the study of contextuality, that of orthogonality graphs. An orthogonality graph representing a set of projective measurements has each projector assigned to a vertex of the graph and edges connect commuting (orthogonal) measurements. The following is a brief summary of the main results in this paper. Firstly, we show that the contextuality of a quantum state as given by the measure of contextuality \cite{Horodecki} in any measurement scenario depends only upon the spectrum of eigenvalues of the state. Consequently, among all states belonging to a given dimension, the maximally mixed state is shown to be the least contextual while the pure states in that dimension are the most contextual for that measurement scenario. This implies that when the contextuality of the maximally mixed state is revealed by a set of projective measurements, the same measurement scenario i.e. a set of measurements represented by the same orthogonality graph upto some global unitary transformation can also reveal the contextuality of \textit{all} quantum states in that dimension. Then the central result of the paper, namely a \textit{necessary and sufficient} condition for a measurement scenario to reveal the contextuality of the maximally mixed state in arbitrary Hilbert space dimension $d$, is shown. This condition is formulated in terms of the graph-theoretic parameter known as the fractional chromatic number $\chi_f(G)$ of the orthogonality graph representing the set of measurements and reveals a common feature of all state-independent contextual orthogonality graphs in relation to their coloring properties.

{\it The class of contextual graphs.} We focus on the orthogonality graph representation of a set of projective measurements $\{\Pi_k\}$, commonly the projectors are of rank one ($\Pi_k  = |v_k \rangle \langle v_k|$). In the orthogonality graph, every projective measurement is denoted by exactly one vertex in the graph and edges connect two vertices if and only if the corresponding projectors are orthogonal. A classical (non-contextual) assignment of outcomes $\{0,1\}$ is then an assignment of $1$'s to the vertices belonging to an independent set of the graph (an independent set is a set of vertices with no edges connecting any pair of vertices) and $0$'s to the other vertices. This assignment guarantees that in each orthonormal basis at most one projector is assigned the value $1$. Orthonormal bases in the graph are represented by maximal cliques which are complete subgraphs of maximal size in which all pairs of vertices are connected by edges. A set of projectors is said to realize the graph $G$ if it strictly obeys the orthogonality constraints of $G$. Given a realization of $G$, it is natural to allow a global unitary transformation on all the projectors since such a transformation preserves the orthogonality relations between projectors and hence preserves $G$. We investigate the contextual properties of orthogonality graphs by considering an optimization over all projectors that realize the particular graph. Note that other representations of contextual measurements by graphs can be translated to the orthogonality graph representation. For instance, one can reformulate in the language of orthogonality graphs one alternative representation found in the literature, namely that of hypergraphs \cite{Peres, Horodecki} where hyperedges consist of commuting observables. One way to do this is to find the common eigenstates of the commuting set of observables in each hyperedge, and to construct the orthogonality graph for the projectors corresponding to these eigenstates.

For an orthogonality graph $G$ with vertex set $\mathcal{V} = \{v_1, \dots v_n\}$, let $\mathcal{I}(G)$ denote the set of all independent sets $I^{(j)}(G)$ of $G$. One may construct corresponding (incidence) vectors $\vec{I}^{(j)}(G) \in \mathbb{R}^n$, where $j \in \{1, \dots, |\mathcal{I}(G)|\}$ with components $\vec{I}^{(j)}_k(G) = 1$ if $v_k \in I^{(j)}(G)$ and $\vec{I}^{(j)}_k(G) = 0$ otherwise. The convex hull of all such vectors $\vec{I}^{(j)}(G)$ then corresponds to the classical non-contextual polytope for the graph $G$ (also known as the stable-set polytope STAB(G)) \cite{Winter}. The set of probabilities realizable in general probabilistic theories consistent with the no-disturbance principle \cite{Monogamy} (in which all the probabilities are well-defined and independent of the context) reside in the polytope known as the clique-constrained stable set polytope QSTAB(G). The notion of clique-constraint refers to the fact that the sum of probabilities for projectors in every orthonormal basis (clique) must not exceed unity. The set of quantum probabilities is a convex set that is in general not a polytope but is contained within QSTAB(G). A quantum state $\rho$ is represented in the polytope by a vector $\vec{x}^{\{\rho\}}(G) \in \mathbb{R}^n$ with components $\vec{x}^{\{\rho\}}_k(G) = tr( \Pi_k \rho)$, corresponding to the probabilities for the optimal projectors $\Pi_k$ obeying $G$ and acting on the state $\rho$. 

Let us now formally state the main ideas corresponding to state-dependent and state-independent contextuality using the notion of orthogonality graphs. For simplicity, we will refer to the states in dimension $d$ i.e. $\rho \in M_{d \times d} (\mathbb{C}^d)$ as $\rho \in \mathbb{C}^d$.

\begin{mydef}
An orthogonality graph $G$ is said to be contextual for dimension $d$ if $\exists \rho \in \mathbb{C}^d$ such that $\vec{x}^{\{\rho\}}(G) \notin STAB(G)$. An orthogonality graph that is not contextual for any dimension is called a non-contextual graph.
\end{mydef}
%If $\forall \rho$, $\vec{x}^{\{\rho\}}(G) \in STAB(G)$ then $G$ is non-contextual. 

\begin{mydef}
An orthogonality graph $G$ is said to be state-independent contextual (S-IC) for dimension $d$ if $\forall \rho \in \mathbb{C}^d$, $\vec{x}^{\{\rho\}}(G) \notin STAB(G)$. 
\end{mydef}

In other words, for a state-dependent contextual graph, the points $\vec{x}^{\{\rho\}}(G)$ corresponding to \textit{some} quantum states $\rho$ lie outside STAB(G), while for state-independent graphs the points corresponding to \textit{all} quantum states of given dimension have this property. Intuitively, one might expect that the points corresponding to the pure states are farther from the stable-set polytope than those corresponding to the more mixed states. The following section proves that this is indeed the case by comparing the values of a measure of contextuality \cite{Horodecki} for pure and mixed states and showing that the measure of contextuality follows the majorization order of quantum states.

{\it Measure of contextuality and spectra of states.} Given an orthogonality graph $G$ realizable by a set of projectors $\{\Pi_j\}$ with $\Pi_j \in \mathbb{C}^d$, a quantitative measure \cite{Horodecki} enables us to study the relative contextuality of quantum states $\rho \in \mathbb{C}^d$. Here, we show that the measure of contextuality for any given measurement scenario only depends on the spectrum of eigenvalues of the state. As a consequence, for any $G$ we show that the maximally mixed state $\identity_d$ is the least contextual while the pure states display maximum contextuality among all states of given dimension. %This facilitates the study of state-independent contextual graphs in the following section. 

%Given a graph $G$ and quantum state $\rho$, the vector $\vec{x}^{\{\rho\}}$ has components defined by $\vec{x}^{\{\rho\}}_k = Tr[\rho \Pi_k]$ with $\Pi_k$ being the projectors corresponding to the orthogonal graph $G$. 
Allowing a maximization over all sets of projectors $\{\Pi_j\}$ realizing $G$, for given state $\rho$ the distance of $\vec{x}^{\{\rho\}}(G)$ from STAB(G) defines a measure of contextuality of $\rho$ with respect to the graph $G$. This measure denoted $M^{(G)}(\rho)$ is defined as 
%Given graph $G$, the measure $M^{(G)}(\rho)$ for state $\rho$ is therefore defined as:
\begin{eqnarray}
M^{(G)}(\rho) &:=& \max_{\{\Pi_k\}} M^{(G)}_{\{\Pi_k\}}(\rho)  \\
&:=&\max_{\{\Pi_k\}} \min_{p(\lambda)} \sum_{c \in \mathcal{C}(G)} \frac{1}{|\mathcal{C}(G)|} D(q^{(\rho,\{\Pi_k\})}(\lambda_c) || p(\lambda_c)). \nonumber
\label{measure}
\end{eqnarray}
Here, the maximization is over all sets of projectors $\{ \Pi_k\}$ realizing $G$, $\mathcal{C}(G)$ denotes the set of maximal cliques in the graph $G$ %(the set of orthonormal bases represented in $G$) 
and $q^{(\rho,\{\Pi_k\})}(\lambda_c) =\{tr(\rho \Pi_k^{(c)})\}$ is the vector of probabilities corresponding to the clique (context) $c$. The minimization is performed over the classical non-contextual joint probability distributions $p(\lambda)$ for the graph $G$ with $p(\lambda_c)$ being the corresponding marginal distribution for context $c$. The relative entropy distance is $D(q(\lambda_c) || p(\lambda_c)) = \sum_{i} q(\lambda_c)_{i} \log \frac{q(\lambda_c)_i}{p(\lambda_c)_i}$ where summation extends over all projectors belonging to the context $c$ and any (possibly different rank) projectors needed to realize a complete measurement ($\sum_i q(\lambda_c)_i = \sum_i p(\lambda_c)_i = 1$). The measure thus captures the notion of distance from the stable-set polytope. Let $\vec{\kappa}(\rho)$ denote the spectrum of eigenvalues of $\rho$, we then have the following. 
%The following theorem then shows that $M^{(G)}(\rho)$ only depends on the spectrum of eigenvalues $\vec{\kappa}(\rho)$. 
%It can now be shown that $\identity_d$ is the least contextual as per this measure, i.e., that it is closest to the non-contextual (stable set) polytope STAB(G). 

\begin{theorem}
$M^{(G)}(\rho)$ is fully determined by $\vec{\kappa}(\rho)$. Moreover, $\forall \rho_1, \rho_2 \in \mathbb{C}^d$ and $\forall G$, $\vec{\kappa}(\rho_1) \prec \vec{\kappa}(\rho_2)$ $\Rightarrow$ $M^{(G)}(\rho_1) \leq M^{(G)}(\rho_2)$. In particular, $M^{(G)}(\identity_d) \leq M^{(G)}(\rho)$, $\forall \rho \in \mathbb{C}^d$. 
\end{theorem}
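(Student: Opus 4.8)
The plan is to establish the two claims in sequence: first, that $M^{(G)}(\rho)$ depends only on the spectrum $\vec{\kappa}(\rho)$, and second, that it is monotone under majorization, from which the statement about $\identity_d$ follows immediately since $\vec{\kappa}(\identity_d) = (1/d,\dots,1/d)$ is majorized by every spectrum in $\mathbb{C}^d$.

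For the \emph{spectrum-dependence} claim, I would exploit the fact that $M^{(G)}$ already carries a maximization over all projector realizations $\{\Pi_k\}$ of $G$. Given any unitary $U$, the set $\{U\Pi_k U^\dagger\}$ is also a realization of $G$ (unitaries preserve orthogonality and rank), and $\mathrm{tr}(U\Pi_k U^\dagger \rho) = \mathrm{tr}(\Pi_k\, U^\dagger\rho U)$. Hence $M^{(G)}_{\{U\Pi_k U^\dagger\}}(\rho) = M^{(G)}_{\{\Pi_k\}}(U^\dagger \rho U)$, and taking the max over realizations on both sides gives $M^{(G)}(\rho) = M^{(G)}(U^\dagger \rho U)$ for every $U$. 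Since any two states with the same spectrum are related by a unitary conjugation, $M^{(G)}$ is a function of $\vec{\kappa}(\rho)$ alone.

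For the \emph{majorization} claim, I would use the standard characterization that $\vec{\kappa}(\rho_1) \prec \vec{\kappa}(\rho_2)$ iff $\rho_1$ lies in the convex hull of the unitary orbit of $\rho_2$, i.e. $\rho_1 = \sum_\mu p_\mu\, U_\mu \rho_2 U_\mu^\dagger$ for some probability weights $p_\mu$ and unitaries $U_\mu$ (a consequence of Uhlmann's theorem / the Schur–Horn theorem applied to the eigenvalue vectors in a common eigenbasis of $\rho_1$). The crux is then a \emph{joint convexity} property: for a fixed realization $\{\Pi_k\}$, the map $\rho \mapsto M^{(G)}_{\{\Pi_k\}}(\rho)$ is convex. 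This follows because $q^{(\rho,\{\Pi_k\})}(\lambda_c)$ is linear (affine) in $\rho$, the relative entropy $D(\cdot\,\|\,\cdot)$ is jointly convex in its two arguments, the finite average over cliques preserves convexity, and the pointwise minimum over the classical distributions $p(\lambda)$ of a family of jointly convex functions is convex in $\rho$. Combining, for fixed $\{\Pi_k\}$,
\begin{equation}
M^{(G)}_{\{\Pi_k\}}(\rho_1) \leq \sum_\mu p_\mu\, M^{(G)}_{\{\Pi_k\}}(U_\mu \rho_2 U_\mu^\dagger) \leq \max_\mu M^{(G)}_{\{\Pi_k\}}(U_\mu \rho_2 U_\mu^\dagger) \leq M^{(G)}(\rho_2),
\end{equation}
where the last step replaces $\{\Pi_k\}$ by $\{U_\mu^\dagger \Pi_k U_\mu\}$ as in the first part. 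Taking the maximum over $\{\Pi_k\}$ on the left yields $M^{(G)}(\rho_1) \le M^{(G)}(\rho_2)$.

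The main obstacle I anticipate is verifying convexity of $M^{(G)}_{\{\Pi_k\}}(\rho)$ cleanly, specifically handling the inner minimization over $p(\lambda)$ and the auxiliary (possibly higher-rank) projectors needed to complete each context. One must check that the domain of admissible $p(\lambda)$ does not depend on $\rho$ (it is fixed by $G$ alone, being the non-contextual polytope), so that the infimum of a $\rho$-indexed family of convex functions over a fixed set is convex; and that the completion projectors, whose probabilities also depend affinely on $\rho$, do not spoil this — they can be absorbed into the definition of the clique-probability vectors consistently across the convex combination. Once joint convexity is secured, the rest is the routine unitary-orbit/majorization argument above.
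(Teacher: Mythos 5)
Your proposal is correct and follows essentially the same route as the paper: unitary covariance of the measure under conjugating the projector realization gives spectrum-dependence, and a convex decomposition of $\rho_1$ into unitary conjugates of $\rho_2$ (you via the unitary-orbit/Uhlmann characterization, the paper via Birkhoff-type permutations of the diagonal) combined with joint convexity of the relative entropy — minimized over the fixed non-contextual polytope — gives majorization monotonicity. The convexity step you flag as the main obstacle is handled in the paper exactly as you suggest: the mixture $\sum_j p_j\, p^{\rho_2^{j}}(\lambda)$ of optimal classical distributions is itself an admissible classical distribution, which is precisely the partial-minimization argument you need.
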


\begin{proof}
%The fact that the measure of contextuality $M^{(G)}(\rho)$ is determined by $\vec{\kappa}(\rho)$ is a 
Consider the maximization procedure in Eq.(\ref{measure}). Two states $\eta_1, \eta_2$ sharing the same spectrum $\vec{\kappa}(\eta)$ are unitarily equivalent, $\eta_1 = U \eta_2 U^{\dagger}$ for some unitary $U$. Therefore, given a set of projectors $\{\Pi_k\}(\eta_1)$ that result from the maximization in the computation of the measure for $\eta_1$, one may construct a corresponding set of projectors $\{U^{\dagger} \Pi_k U \}$ that gives the same value of the measure for $\eta_2$ (and vice versa). 
%The dependence of $M^{(G)}(\rho)$ on the spectrum $\vec{\kappa}(\rho)$ follows.

Given $\rho_1, \rho_2 \in \mathcal{H}^d$, if their vector of eigenvalues obey the majorization relation $\vec{\kappa}(\rho_1) \prec \vec{\kappa}(\rho_2)$, it then follows \cite{Horn}  that there exist $d$-dimensional permutation matrices $P_j$ and a probability distribution $\{p_j\}$ such that $$\vec{\kappa}(\rho_1) = \sum_j p_j P_j \vec{\kappa}(\rho_2).$$ Given an orthogonal graph $G$, let us denote the set of corresponding projectors resulting from the maximization in the measure $M^{(G)}(\rho_1)$ by $\{\Pi_k\}(\rho_1)$ as above. With $\rho_2$ written in diagonal form, let $\rho_{2}^{j} = P_j \rho_2 P_j^{\dagger}$ denote the diagonal matrices obtained from the permutation of the eigenbasis of $\rho_2$. For each $\rho_2^{j}$, let the joint probability distribution resulting from the minimization in Eq.(\ref{measure}) with the chosen set of projectors $\{\Pi_k\}(\rho_1)$ be denoted by $p^{\rho_2^{j}}(\lambda)$. We then have,
\begin{widetext}
\begin{eqnarray}
M^{(G)}(\rho_2) \geq \sum_{j} p_{j} M^{(G)}_{\{\Pi_k\}(\rho_1)}(\rho_2^{j}) &=& \sum_{c \in \mathcal{C}(G)} \frac{1}{|\mathcal{C}(G)|} \sum_j p_j \left(D(q^{(\rho_2^{j},\{\Pi_k\}(\rho_1))}(\lambda_c) \Vert p^{\rho_2^{j}}(\lambda_c))\right) \nonumber \\
&\geq& \sum_{c \in \mathcal{C}(G)} \frac{1}{|\mathcal{C}(G)|} \left(D(\sum_j p_j q^{(\rho_2^{j},\{\Pi_k\}(\rho_1))}(\lambda_c) \Vert \sum_j p_j p^{\rho_2^{j}}(\lambda_c))\right) \nonumber \\
&\geq& \sum_{c \in \mathcal{C}(G)} \frac{1}{|\mathcal{C}(G)|}  \left(D(q^{(\rho_1,\{\Pi_k\}(\rho_1))|}(\lambda_c) \Vert \sum_j p_j p^{\rho_2^{j}}(\lambda_c))\right) \nonumber \\
&\geq& M^{(G)}(\rho_1). \nonumber 
\end{eqnarray}
\end{widetext}
In the above, the first inequality is a result of the fact that $\{\Pi_k\}(\rho_1)$ may not yield the maximum in the computation of the measure for $\rho_2$, the second inequality is a result of the convexity property of the relative entropy, the third follows from linearity, and the fourth by the definition of the measure for state $\rho_1$. The monotonicity of the measure of contextuality with majorization order thus follows. 

Since the spectrum of the maximally mixed state $\identity_d$ is majorized by that of every other state $\rho \in \mathcal{H}^d$, it follows that $\identity_d$ is the least contextual among all states belonging to a given Hilbert space and analogously, pure states are the most contextual.
\end{proof}

The theorem shows that the ordering of states according to the measure of contextuality follows the majorization order for any measurement scenario $G$. A similar analysis as in the proof above shows that not only the measure of contextuality, but also the violation of non-contextuality inequalities (defining the facets of STAB(G)) by quantum states follows the majorization order. In other words, $\vec{\kappa}(\rho_1) \prec \vec{\kappa}(\rho_2)$ implies that the violation of any non-contextuality inequality by $\rho_1$ does not exceed its violation by $\rho_2$ when we allow an optimization over projectors realizing $G$. 

Knowing that the maximally mixed state is the least contextual among all states in the given Hilbert space for any $G$, the identification of state-independent contextual measurement scenarios can be reduced to the question: for given $G$, does $\vec{x}^{\{\identity_d\}}(G)$ belong outside the stable-set polytope? As we shall see in the following, there is one graph-theoretic quantity (the fractional chromatic number) that precisely identifies when this happens. 

{\it State-independent contextual graphs.} 
We now proceed to formulate a necessary and sufficient condition for an orthogonality graph $G$ realizable in dimension $d$ to be state-independent contextual in that dimension in terms of its fractional chromatic number $\chi_f(G)$ \cite{Scheinerman}, defined below using the notion of fractional colorings. We show that this condition is strictly stronger than an analogous necessary condition proposed in \cite{Cabello3} and present an explicit counter-example showing that the latter condition is not sufficient.
%\begin{mydef}
\begin{mydef}
A fractional coloring of a graph $G$ \cite{Scheinerman} is a non-negative real-valued function $f$ on $\mathcal{I}(G)$ such that for any vertex $v$ of $G$, $\sum_{s \in \mathcal{I}(G,v)} f(s) \geq 1$, where $\mathcal{I}(G,v)$ denotes the set of independent sets of $G$ that contain vertex $v$. The weight of a fractional coloring is $w = \sum_{j=1}^{\mathcal{I}(G)} f(I^{(j)}(G))$, and $\chi_f(G) = \min w$. Equivalently, $\chi_f(G) = \min \frac{a}{b}$ s.t. there is a coloring of $G$ using $\texttt{a}$ colors that assigns $\texttt{b}$ colors to each vertex with adjacent vertices getting disjoint sets of colors.
\end{mydef}

For any graph it is known that $\omega(G) \leq \chi_f(G) \leq \chi(G)$, where the clique number $\omega(G)$ is the size of the maximal clique in the graph. The first inequality comes from the fact that in a clique of size $\omega(G)$, assigning $b$ colors to each vertex requires at least $a = \omega(G) b$ colors; the second inequality can be seen from the fact that $\chi(G)$ is the particular restriction $b=1$ in the definition of $\chi_f(G)$ above. In fact, there is a practical way to compute $\chi_f(G)$ by means of a linear program. One assigns weights $w_{j}$ to each independent set $I^{j}(G) \in \mathcal{I}(G)$. Then $\chi_f(G) = \min\{\sum_{j} w_{j}: \sum_{j} w_{j} \vec{I}^{j}(G) \geq \vec{\bf{1}}\}$

\begin{theorem}
An orthogonal graph $G$ realizable by a set of rank-$r$ projective measurements $\{\Pi_j\}$ with $\Pi_j \in \mathbb{C}^d$ is state-independent contextual for dimension $d$ if and only if $\chi_f(G) > \frac{d}{r}$. 
\end{theorem}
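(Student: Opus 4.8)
The plan is to reduce the statement, via Theorem~1, to a purely combinatorial question about the uniform point of $STAB(G)$, and then to settle that question by a short linear-programming argument. First I would note that, by Theorem~1, the maximally mixed state $\identity_d$ is the least contextual state for the scenario $G$, so $G$ is state-independent contextual for dimension $d$ if and only if $\vec{x}^{\{\identity_d\}}(G)\notin STAB(G)$. The decisive observation is that this vector is independent of the chosen rank-$r$ realization of $G$: for any rank-$r$ projector $\Pi_k$ one has $\Tr(\Pi_k\identity_d)=r/d$, so $\vec{x}^{\{\identity_d\}}(G)=\frac{r}{d}\,\vec{\mathbf{1}}$ identically, and the maximization over realizations in the definition of $M^{(G)}$ is vacuous here. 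Thus everything reduces to deciding whether the uniform vector $\frac{r}{d}\,\vec{\mathbf{1}}$ lies in $STAB(G)$.

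The second step is the elementary claim that for a constant $0\le c\le 1$ the uniform vector $c\,\vec{\mathbf{1}}$ lies in $STAB(G)$ if and only if $c\le 1/\chi_f(G)$. For the ``if'' direction I would take an optimal fractional coloring, i.e.\ weights $w_j\ge 0$ on the independent sets with $\sum_j w_j\vec{I}^{(j)}(G)\ge\vec{\mathbf{1}}$ and $\sum_j w_j=\chi_f(G)$ (the linear-program form of $\chi_f$ stated above), scale them by $c$, and add weight $1-c\,\chi_f(G)\ge 0$ on the empty independent set; since $STAB(G)$ is down-monotone in the positive orthant this exhibits $c\,\vec{\mathbf{1}}$ as a point of $STAB(G)$. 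For the ``only if'' direction, any convex representation $c\,\vec{\mathbf{1}}=\sum_j\lambda_j\vec{I}^{(j)}(G)$ with $\lambda_j\ge 0$, $\sum_j\lambda_j=1$ gives, after dividing by $c>0$, a feasible point of the defining linear program of $\chi_f(G)$, whence $1/c=\sum_j\lambda_j/c\ge\chi_f(G)$. Setting $c=r/d$ yields $\frac{r}{d}\,\vec{\mathbf{1}}\in STAB(G)$ iff $\chi_f(G)\le d/r$, and negating this together with the reduction of the first step gives the theorem.

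It is worth recording the consistency check that realizability of $G$ by mutually orthogonal rank-$r$ projectors in $\mathbb{C}^d$ forces $\omega(G)\,r\le d$, hence $\omega(G)\le d/r$; combined with the general bound $\omega(G)\le\chi_f(G)$, the criterion $\chi_f(G)>d/r$ can be met only by graphs with $\omega(G)<\chi_f(G)$. This is the structural hallmark the theorem isolates: an S-IC scenario is precisely one whose orthogonality graph has fractional chromatic number strictly exceeding the ``budget'' $d/r$ that the clique number alone is forced to respect, which is why scenarios beyond the Kochen-Specker paradigm (where coloring is obstructed outright) can still be S-IC.

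I expect the main obstacle to be the bookkeeping in the first step rather than the linear algebra of the second. One must verify carefully that ``$\vec{x}^{\{\rho\}}(G)\notin STAB(G)$'' in Definition~2 is the faithful translation of ``$\rho$ is contextual'' that Theorem~1 orders by majorization --- i.e.\ that vanishing of every relative-entropy term on every clique is equivalent to the vertex-marginal vector lying in $STAB(G)$, and that this equivalence is unaffected by the higher-rank ``padding'' projectors one may need to complete a measurement. For the maximally mixed state this is unambiguous precisely because its probability vector is realization-independent, so once the equivalence of measure-positivity with $STAB(G)$-exclusion is pinned down for a fixed realization, the rest is the short duality computation above.
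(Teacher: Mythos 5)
Your proof is correct and follows essentially the same route as the paper: reduce via Theorem~1 (and the realization-independence of $\vec{x}^{\{\identity_d\}}(G)=\frac{r}{d}\vec{\mathbf{1}}$) to deciding whether the uniform point lies in $STAB(G)$, then settle that via the linear-programming characterization of $\chi_f(G)$ --- your rescaling of a convex decomposition into a feasible fractional coloring is exactly the paper's case (iii), and your scaled-optimal-coloring-plus-empty-independent-set construction matches its cases (i)--(ii). The only cosmetic difference is that you invoke down-monotonicity of $STAB(G)$ where the paper instead splits into the cases $\chi_f(G)=d/r$ and $\chi_f(G)<d/r$ and uses the exact $a^*/b^*$ coloring.
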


%A set of rank-$r$ projective measurements $\{\Pi_j\}$ with $\Pi_j \in \mathcal{C}^d$ represented by an orthogonality graph $G$ upto a global unitary transformation reveals contextuality of all states $\rho \in \mathcal{C}^d$ if and only if $\chi_f(G) > \frac{d}{r}$. 
The proof of the above theorem is provided in the supplemental material. The importance of the above necessary and sufficient condition stems from the fact that it enables one to identify sets of projective measurements that reveal the contextuality of all states in a given dimension, upto some global unitary transformations. In particular, it is powerful enough to identify not only the Kochen-Specker measurement configurations but also the recently found surprising alternative proofs of state-independent contextuality where all the Kochen-Specker constraints are obeyed \cite{Yu-Oh}.

%\textit{Proof:} 
%\textbf{Note that a possible unitary on all measurements is assumed here}.

{\it Discussion.}
The identification of measurement configurations that reveal state-independent contextuality has gained importance owing to the recent development of applications of contextuality. One related necessary condition has been derived in \cite{Cabello3} where it was shown that a set of rank-1 projective measurements reveals the contextuality of the maximally mixed state in dimension $d$ if $\chi(G) > d$. Now, we will show that this condition is not sufficient to identify state-independent contextual graphs by means of an explicit counter-example. 

Consider the state-independent orthogonality graph in dimension $3$, $G_{YO}$ considered in \cite{Yu-Oh}. The graph $G_{YO}$ is a $13$-vertex graph with $\chi(G_{YO}) = 4$ and $\xi(G_{YO}) = 3$. Here $\xi(G)$ denote the smallest dimension in which the graph $G$ may be realized by rank-1 projectors with distinct vertices assigned distinct projectors and two vertices connected by an edge if the corresponding projectors are orthogonal. The fractional chromatic number of the graph $G_{YO}$ can be computed by means of a linear program to be $\chi_f(G_{YO}) = 3 \frac{2}{11}$ so that it is indeed state-independent contextual in dimension $3$. The counter-example to the sufficiency of the chromatic number condition is provided by the join of two copies of the graph $G_{YO}$ (the join of two graphs is their graph union with additional edges connecting all the vertices of one graph with those of the other). The join ($J$) of two $G_{YO}$ graphs has $\chi(J(G_{YO},G_{YO})) = 8$ and  $\xi(J(G_{YO},G_{YO})) = 6$. Moreover, its fractional chromatic number can be computed to be $\chi_f(J(G_{YO},G_{YO})) = 6\frac{4}{11}$ so that in dimension $7$ this graph is not by itself state-independent contextual as $\vec{x}^{\{\identity_7\}}(J(G_{YO}, G_{YO})) \in STAB(J(G_{YO},G_{YO}))$. An explicit decomposition of the point $\vec{x}^{\{\identity_7\}}(J(G_{YO}, G_{YO}))$ in terms of a convex combination of deterministic points is provided in the Supplementary Material. 

As the counter-example shows, $\chi_f(G) > d$ does not necessarily imply $\chi(G) > d$ and the latter is only a necessary condition for $G$ to be state-independent contextual. %Therefore, the condition $\chi(G) > d$ is not sufficient for a graph $G$ to reveal state-independent contextuality by itself. 
%While $\chi_f(G) > d \Rightarrow \chi_f(G) > d$ since the fractional chromatic number is bounded from above by the chromatic number, it is not clear if $\chi(G) > d \Rightarrow \chi_f(G) > d$. 
Moreover, it is known that the gap between $\chi_f(G)$ and $\chi(G)$ can be arbitrarily large \cite{Scheinerman} so that $\chi_f(G)$ is by itself a better indicator of the S-IC property of $G$. In general, $\chi_f(G)$ is computed as a lower bound to $\chi(G)$ by means of the fractional relaxation to the integer linear program required to compute the latter quantity. However, the fact that it can be computed by a linear program does not imply that $\chi_f(G)$ can be computed in polynomial time, since the number of independent sets may be exponential in the number of vertices. An interesting possibility is that $\chi(G) > d$ may constitute a sufficient condition for state-independent contextuality of a measurement scenario $G_c$, formed from $G$ by adding projectors so that each clique constitutes a complete orthonormal basis, this is left as an open question.

{\it Conclusions.} We have investigated the requirements for a measurement scenario consisting of a set of projective measurements to reveal state-independent contextuality. 
%In the state-dependent scenario, identifying a proper subgraph that is contextual in the realization is sufficient to guarantee contextuality of the graph. The condition for non-contextuality based on equality of $\theta(G)$ and $\alpha(G)$ does not hold for general orthogonality graphs.
In terms of the rigorous measure of contextuality, for any measurement scenario represented by an orthogonality graph $G$, the maximally mixed state was shown to be the least and pure states the most contextual among all states of a given Hilbert space dimension. A necessary and sufficient condition for identifying state-independent contextual measurement scenarios was formulated in terms of the fractional chromatic number $\chi_f(G)$. A similarly formulated condition in terms of $\chi(G)$ fails to guarantee sufficiency of the S-IC property for $G$. $\chi_f(G)$ being in general easier to compute than $\chi(G)$, the formulated condition enables the identification of state-independent contextual measurements which have been utilized as resources in various scenarios \cite{Horodecki2, Acin2, Guhne, Spekkens}.

{\em Acknowledgements}. We thank Prof. Ryszard Horodecki for a critical reading of the manuscript. This work is supported by the ERC grant QOLAPS.  
%and the National Research Foundation $\&$ Ministry of Education, Singapore.  

{\bf Supplementary Material}

Here, we present proof of Theorem 2 formulated in the main text along with an explanation of the counter-example to the condition based on the chromatic number \cite{Cabello3}.

\textit{\textbf{Theorem 2.} An orthogonality graph $G$ realizable by a set of rank-$r$ projective measurements $\{\Pi_j\}$ with $\Pi_j \in \mathbb{C}^d$ is state-independent contextual for dimension $d$ if and only if $\chi_f(G) > \frac{d}{r}$. }
%\textbf{Theorem 2.} An orthogonal graph $G$ realizable by a set of rank-$r$ projective measurements $\{\Pi_j\}$ in dimension $d$ reveals contextuality of all states $\rho \in \mathcal{H}^d$ if and only if $\chi_f(G) > \frac{d}{r}$. 
%A set of rank-$r$ projective measurements $\{\Pi_j\}$ with $\Pi_j \in \mathcal{H}^d$ represented by an orthogonality graph $G$ upto a global unitary transformation reveals contextuality of all states $\rho \in \mathcal{C}^d$ if and only if $\chi_f(G) > \frac{d}{r}$. 
\begin{proof}
For given set of rank-$r$ projective measurements $\{\Pi_j\}$ in $d$-dimensional space, let us construct the orthogonality graph $G$ with a total of $|\{\Pi_j\}| = n$ vertices. The vector of expectation values for the maximally mixed state $\vec{x}^{\{\identity_{d}\}}(G)$ is then the uniform vector with components  $\vec{x}^{\{\identity_{d}\}}_k(G) = \frac{r}{d}$ for $1 \leq k \leq n$. The fractional chromatic number is by definition $\chi_f(G) =\min{\frac{a}{b}}$ where out of a total of $'a'$ colors, $'b'$ colors are assigned to each vertex such that vertices connected by an edge receive disjoint sets of colors. Let $\{a^*(G), b^*(G)\}$ denote the values of $a$ and $b$ that achieve the minimum in this definition. For each of the colors $1 \leq c \leq a^*(G)$, let $\vec{I}^{(c)}(G)$ denote the incidence vector of the independent set of vertices colored with color $c$ and $\vec{I}^{(0)}(G)$ denote the vector with all components $0$. We may consider three cases.

(i) $\chi_f(G)= \frac{a^*}{b^*} = \frac{d}{r}$. 

In this case, the fractional colorings themselves yield the convex decomposition 
$$\vec{x}^{\{\identity_{d}\}}(G) = \sum_{c=1}^{a^*} \frac{1}{a^*}\vec{I}^{(c)}(G).$$
Since each vertex in the decomposition appears in a total of $b^*$ independent sets, the uniform vector representing the maximally mixed state is exactly reproduced. 

(ii) $\chi_f(G) = \frac{a^*}{b^*} < \frac{d}{r}$.

Let $a' = \frac{b^* d}{r} > a^*$. Then one may construct the following decomposition into non-contextual assignments
$$\vec{x}^{\{\identity_{d}\}}(G) = \sum_{c=1}^{a^*} \frac{1}{a'}\vec{I}^{(c)}(G) + (1-\frac{a^*}{a'}) \vec{I}^{(0)}(G)$$
with the final vector appearing to ensure the probabilities sum to unity. Therefore, once again we have that $\vec{x}^{\{\identity_{d}\}}(G) \in STAB(G)$.

(iii) $\chi_f(G) > \frac{d}{r}$. 

To prove that in this case, we definitely have $\vec{x}^{\{\identity_{d}\}}(G) \notin STAB(G)$, we may use the formulation of $\chi_f(G)$ in terms of a linear program. Let us assign weights $w_j$ to each independent set $I^{(j)}(G) \in \mathcal{I}(G)$. Then $\chi_f(G) = min \{\sum_{j=1}^{|\mathcal{I}(G)|} w_{j}: \sum_{j=1}^{|\mathcal{I}(G)|} w_{j} \vec{I}^{(j)}(G) \geq \vec{\bf{1}}\}$ where $\vec{\bf{1}}$ denotes the uniform vector of $1$'s on each vertex. 
The following Lemma \cite{Scheinerman} then enables us to change the inequality in the definition to equality. 

\textit{Lemma:} If a graph $G$ has a fractional coloring with total weight $w = \sum_{j=1}^{|\mathcal{I}(G)|} w_j$ such that $\sum_{j=1}^{|\mathcal{I}(G)|} w_{j} \vec{I}^{(j)}(G) \geq \vec{\bf{1}}\}$, then one may construct a fractional coloring with weight $w' \leq w$ such that  $\sum_{j=1}^{|\mathcal{I}(G)|} w'_{j} \vec{I}^{(j)}(G) = \vec{\bf{1}}\}$.

Assume $\vec{x}^{\{\identity_{d}\}}(G) \in STAB(G)$, and let the following be a convex decomposition for $\vec{x}^{\{\identity_{d}\}}(G)$ into non-contextual assignments. 
$$ \vec{x}^{\{\identity_{d}\}}(G) = \sum_{j=1}^{\mathcal{I}(G)} p_j \vec{I}^{(j)}(G).$$ Then since the components $\vec{x}^{\{\identity_{d}\}}_k(G) = \frac{r}{d}$ for $1 \leq k \leq n$, we have $\sum_{j=1}^{\mathcal{I}(G)} \frac{p_j d}{r}\vec{I}^{(j)} = \vec{\bf{1}}$ so that $\chi_f(G) \leq \sum_{j=1}^{\mathcal{I}(G)} \frac{p_j d}{r} = \frac{d}{r}$ which is a contradiction. 
Therefore, $\chi_f(G) > \frac{d}{r}$ is a necessary and sufficient condition for $\vec{x}^{\{\identity_{d}\}}(G) \notin STAB(G)$ (in other words, for $G$ to be state-independent contextual).
%A practical way to compute $\chi_f(G)$ is by means of a linear program. One assigns weights $w_{j}$ to each independent set $I^{j}(G) \in \mathcal{I}(G)$. Then $\chi_f(G) = min \{\sum_{j} w_{j}: \sum_{j} w_{j} \vec{I}^{j}(G) \geq \vec{\bf{1}}\}$
\end{proof}

%{\em Acknowledgements}.  
%\begin{theorem}

{\it II. An explicit decomposition of $\vec{x}^{\{\identity_7\}}(J(G_{YO}, G_{YO}))$.}
Consider two copies of the state-independent graph $G_{YO}$ considered in \cite{Yu-Oh} and reproduced in Fig. ({\ref{yuoh}}). The join of the two copies is defined as the graph $J(G_{YO}, G_{YO})$ obtained by the union of the two graphs with additional edges connecting every vertex of the first copy with every vertex of the second. 

\begin{figure}[t]
\vspace{-1cm}
\begin{center}
\includegraphics[scale=0.3]{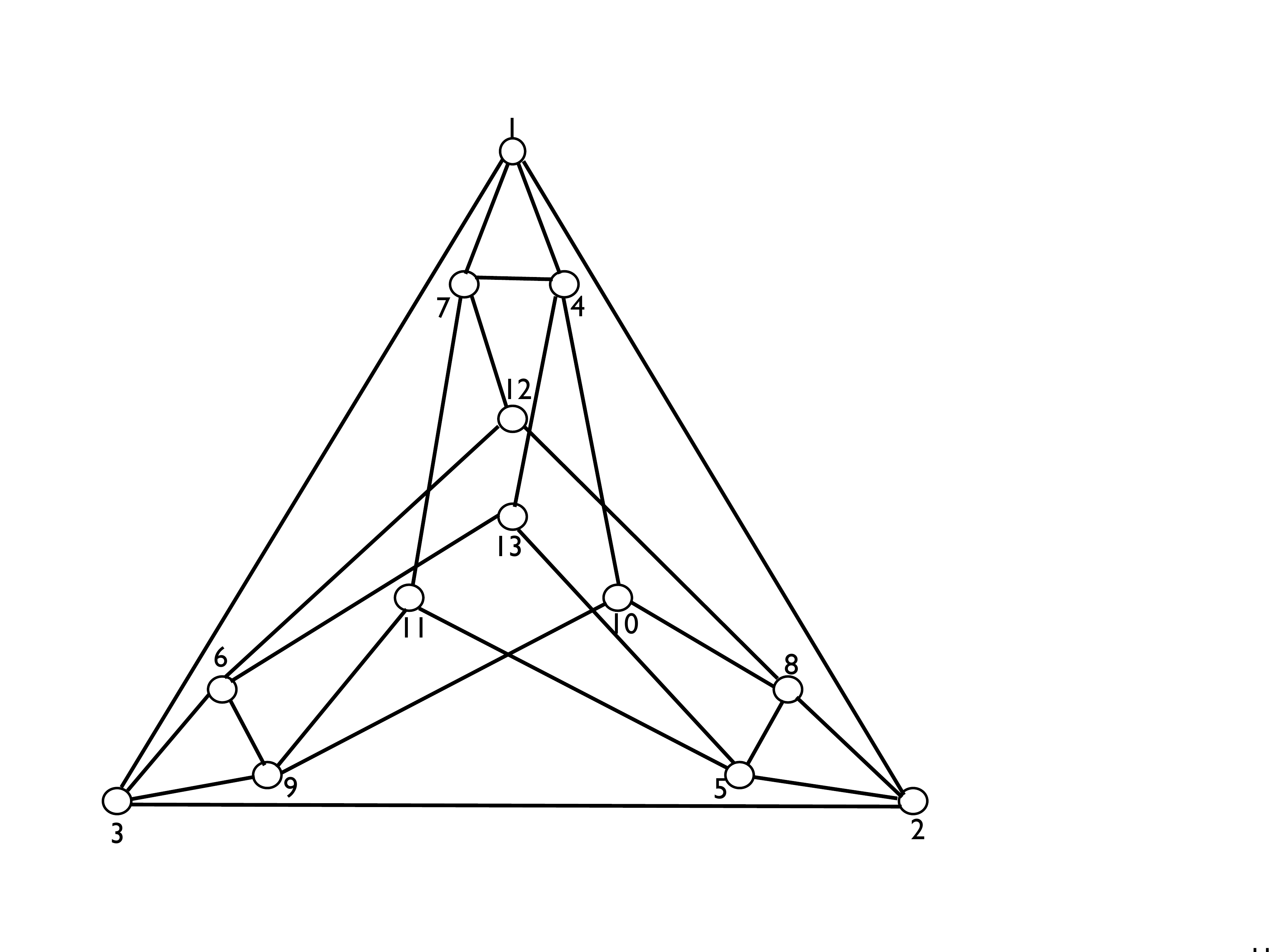}
\end{center}
\vspace{-1cm}
\caption{The $13$ vertex graph $G_{YO}$. The join of two copies of this graph provides a counter-example to $\chi(G) > d$ being a sufficient condition for a graph to be state-independent contextual by itself.}%Moreover, removing vertices $\{6,9\}$ results in a graph which is by itself state-dependent but has a completion to a $25$ vertex graph that is state-independent.}
\label{yuoh}
\end{figure}

Let us label the vertices $\{1, \dots, 13\}$ in one copy as in Fig. (\ref{yuoh}) and $\{1',\dots,13'\}$ in the second copy. For notational convenience, let us use $\vec{I}(\{v_i\})$ to denote the incidence vector of the independent set $\{v_i\}$. $J(G_{YO}, G_{YO})$ has $\omega(J(G_{YO}, G_{YO})) = 6$ and $\chi(J(G_{YO}, G_{YO})) = 8$ so one may expect it to state-independent contextual (by itself without completion) for dimension $7$. However, the following explicit convex decomposition of the vector $\vec{x}^{\{\identity_7\}}(J(G_{YO}, G_{YO}))$ proves otherwise.
\begin{widetext}
\begin{eqnarray}
\vec{x}^{\{\identity_7\}}(J(G_{YO}, G_{YO})) &=&\frac{1}{77} \big[ 5 \vec{I}(\{1,5,6,10\})  +  4 \vec{I}(\{1,5,9,12\}) + \vec{I}(\{1,6,8,11\}) + \vec{I}(\{1,10,11,12,13\}) + 2 \vec{I}(\{2,4,6,11\}) \nonumber \\ &+& 2 \vec{I}(\{2,4,9,12\}) + 3 \vec{I}(\{2,6,7,10\})  + 3 \vec{I}(\{2,7,9,13\}) + \vec{I}(\{2,10,11,12,13\}) + 5 \vec{I}(\{3,4,8,11\}) \nonumber \\ &+& 5 \vec{I}(\{3,7,8,13\}) + \vec{I}(\{3,10,11,12,13\}) + 2 \vec{I}(\{4,5,9,12\}) \big]  + \frac{1}{77} \left(v_i \leftrightarrow v'_i\right) + \frac{1}{11}\vec{I}^0 \nonumber
\end{eqnarray}
\end{widetext}
Here, $\left(v_i \leftrightarrow v'_i\right) $ denotes the same combination of independent sets considered on the second copy of $G_{YO}$ and $\vec{I}^0$ denotes the vector with all vertices assigned value $0$. From the above, note that $\chi_f(J(G_{YO}, G_{YO})) = 6\frac{4}{11}$. Note that this does not rule out the possibility that a completed version of $J(G_{YO}, G_{YO})$ in dimension $7$ is a state-independent graph.

\end{document}